\newcommand{\txtquery}[1]{\ensuremath{\mathsf{#1}}}
\newcommand{\Partner}{\txtquery{Partner}}
\newcommand{\SendCQ}{\txtquery{C2Q}}
\newcommand{\SendQC}{\txtquery{Q2C}}
\newcommand{\SendC}{\txtquery{SendC}}
\newcommand{\SendQ}{\txtquery{SendQ}}
\newcommand{\RevealNext}{\txtquery{RevealNext}}
\newcommand{\Test}{\txtquery{Test}}
\newcommand{\sid}[1]{\ensuremath{{\mathrm \Psi}^{#1}}}%
\newcommand{\Sign}{\ensuremath{\mathsf{Sign}}}
\newcommand{\txtvariable}[1]{\ensuremath{#1}}
\newcommand{\msg}{\txtvariable{msg}}
\newcommand{\params}{\txtvariable{params}}
\newcommand{\pid}{\txtvariable{pid}}
\newcommand{\sk}{\txtvariable{sk}}
\newcommand{\vv}{\mathbf{v}}
\newcommand{\uu}{\mathbf{u}}
\newcommand{\tapeQuantum}{q}
\newcommand{\tapeClassical}{c}
\newcommand{\tapeRandom}{r}
\newcommand{\tapeInter}{e}
\newcommand{\Succ}{\ensuremath{\mathsf{Succ}}}
\newcommand{\abort}{\ensuremath{\mathsf{abort}}}
\newcommand{\sig}{\ensuremath{\mathsf{sig}}}
\newcommand{\negl}{\ensuremath{\mathrm{negl}}}
\newcommand{\getsr}{\ensuremath{\stackrel{\$}{\gets}}}
\newcommand{\nullvalue}{\ensuremath{\perp}}
\newcommand{\numParties}{\ensuremath{n_{P}}}
\newcommand{\ket}[1]{\ensuremath{|#1\rangle}}
\newcommand{\tickYes}{\checkmark}
\newcommand{\tickNo}{\ensuremath{\times}}
\title{Quantum Key Distribution in the Classical Authenticated Key Exchange Framework}
\author{Michele Mosca$^{1,2}$, Douglas Stebila$^{3}$, and Berkant Ustaoglu$^{4}$ \\
~ \\
\small $^{1}$ Institute for Quantum Computing and Dept. of Combinatorics \& Optimization \\
\small University of Waterloo, Waterloo, Ontario, Canada \\
\small $^{2}$ Perimeter Institute for Theoretical Physics, Waterloo, Ontario, Canada \\
\small \href{mailto:mmosca@uwaterloo.ca}{\tt mmosca@uwaterloo.ca} \\
\small $^{3}$ Information Security Institute, Queensland University of Technology, Brisbane, Queensland, Australia \\
\small \href{mailto:stebila@qut.edu.au}{\tt stebila@qut.edu.au} \\
\small $^{4}$ Department of Mathematics, Izmir Institute of Technology, Urla, Izmir, Turkey \\
\small \href{mailto:bustaoglu@uwaterloo.ca}{\tt bustaoglu@uwaterloo.ca}
}
\date{June 27, 2012}
\definecolor{darkblue}{rgb}{0,0,0.5}
\definecolor{darkgreen}{rgb}{0,0.5,0}
\newtheorem{definition}{Definition}
\newtheorem{theorem}{Theorem}
\newtheorem{remark}{Remark}
\begin{document} 

\maketitle
\begin{abstract}
Key establishment is a crucial primitive for building secure channels: in a multi-party setting, it allows two parties using only public authenticated communication to establish a secret session key which can be used to encrypt messages.  But if the session key is compromised, the confidentiality of encrypted messages is typically compromised as well.  Without quantum mechanics, key establishment can only be done under the assumption that some computational problem is hard. Since digital communication can be easily eavesdropped and recorded, it is important to consider the secrecy of information anticipating future algorithmic and computational discoveries which could break the secrecy of past keys, violating the secrecy of the confidential channel.

Quantum key distribution (QKD) can be used generate secret keys that are secure against any future algorithmic or computational improvements.  QKD protocols still require authentication of classical communication, however, which is most easily achieved using computationally secure digital signature schemes.  It is generally considered folklore that QKD when used with computationally secure authentication is still secure against an unbounded adversary, provided the adversary did not break the authentication during the run of the protocol.  

We describe a security model for quantum key distribution based on traditional classical authenticated key exchange (AKE) security models.  Using our model, we characterize the long-term security of the BB84 QKD protocol with computationally secure authentication against an eventually unbounded adversary.  By basing our model on traditional AKE models, we can more readily compare the relative merits of various forms of QKD and existing classical AKE protocols.  This comparison illustrates in which types of adversarial environments different quantum and classical key agreement protocols can be secure.

~

{\bf Keywords: } quantum key distribution, authenticated key exchange, cryptographic protocols, security models
\end{abstract}

\section{Introduction}\label{sec:intro}

Quantum key distribution (QKD) promises new security properties compared to cryptography based on computational assumptions: QKD can provide for two parties to establish a secure key using an untrusted quantum channel and a public, authenticated classical channel, and this key is secure against any adversary who is limited solely by the laws of quantum mechanics.  While some classical\footnote{We use the adjective ``classical'' to mean ``non-quantum'', so ``classical cryptography'' means ``non-quantum cryptography'', not ``historical cryptography''.} cryptographic tasks can be achieved with information-theoretic security against unbounded adversaries, key establishment over a public authenticated channel is not one of them.  Moreover, the practicality of such information-theoretically secure schemes is often limited, and as a result most classical cryptographic schemes rely for their security on various computational assumptions, the most widely used of which --- factoring, discrete logarithms --- could be efficiently solved by a large-scale quantum computer.  As a result, QKD could be an important primitive for cryptography secure against any advances in computing technology, provided quantum mechanics remains an accurate description of the laws of nature.

\emph{Authenticated key establishment} (AKE) is the cryptographic task which QKD achieves.  The classical cryptographic literature has extensively studied AKE since the founding of public key cryptography in 1976.  After a period of ad hoc security analysis of key establishment protocols based on resistance to various individual attacks, protocols are now generally analyzed within the context of a security model, which aims to capture a wide variety of security properties in the context of an attacker who can control all communication, as well as possibly compromise participants; proofs typically consist of probabilistic reductions to computationally hard problems.  One seminal model for security of AKE protocols was proposed by Bellare and Rogaway~\cite{BR93a}.  The BR model led to the CK01 model by Canetti and Krawczyk~\cite{CK01}, upon which was based the eCK model~\cite{LLM07}.  An alternative approach to this family of security models is given by Canetti's \emph{universal composability framework}~\cite{Can01}.  One of the general observations of this line of work has been that calculating a secret key is relatively easy, but properly modelling authentication --- ensuring that the key is shared with precisely the intended party and no other --- requires greater care.

There are many types of QKD protocols, but for our purposes we will divide them into 3 classes: prepare-send-measure protocols, measure-only protocols, and prepare-send-only protocols.  The first QKD protocol, now called BB84, was proposed by Bennett and Brassard \cite{BB84}; it is an example of a prepare-send-measure protocol in which Alice randomly prepares one of several quantum states, sends it to Bob, and Bob randomly measures in one of several settings.  Ekert \cite{Eke91} proposed an entanglement-based protocol, which is an example of a measure-only protocol: Alice and Bob only randomly measure in one of several settings; the state itself can be prepared by Eve entirely untrusted.  Biham et al. \cite{BHM96} proposed a prepare-send-only protocol, in which Alice and Bob each randomly prepare one of several quantum states and send them to Eve, who measures and sends back a classical result.  Different versions can be appealing due to ease of implementation, resistance to side-channel attacks on preparing or measuring, or device independence.

Research arguing for the security of QKD has largely proceeded independent of the aforementioned classical AKE security models.  Various proofs of QKD have been given in a stand-alone 2-party setting; some of the most important ones include \cite{May96,LC99,BBBMR00,SP00,Ina02,GLLP04,Ren05}, but many others exist for different variants of QKD; some work on QKD has been done in the universal composability framework~\cite{BHLMO05}.  These proofs typically proceed under the assumption that classical communication happens over on authentic public channel; details on authenticating the classical communication are typically left out of the analysis.  It is widely recognized that the authentication can be secure against an unbounded adversary if all classical communication is protected by information-theoretically secure message authentication codes, such as the Wegman-Carter 2-universal hash function \cite{CW79,WC81}.  Alternatively, it is generally considered folklore \cite{PPS06,SECOQC07,SML09,IM11} that if QKD was performed using a computationally secure authentication scheme (such as public key digital signatures), then messages encrypted under the keys output by QKD would be secure provided that the adversary could not break the authentication scheme \emph{before or during} the QKD protocol.

\paragraph{\bf Contributions.}
Our goal is to describe the security of quantum key distribution in a security model similar to existing classical authenticated key exchange protocols and compare the relative security properties of various QKD and classical AKE protocols.  Our model is explicitly a multi-party model, includes authentication, and allows for either computationally secure or information theoretically secure authentication.  We aim to capture two properties: (1) QKD is \emph{immediately secure} against an active adversary who is restricted such that he is unable to break the authentication scheme, and (2) QKD is \emph{long-term secure}, meaning that, if it is secure against an active adversary who is restricted during the run of the protocol to be unable to break the authentication scheme, then it remains secure even when the (classical and quantum) data obtained by the active bounded adversary are subsequently given to an unbounded quantum adversary.

{\it Security model for classical-quantum AKE protocols.}
In particular, we first introduce in Section~\ref{sec:model} a multi-party model for analyzing the security of QKD protocols.  In our model, which adopts the formalism of Goldberg et al.'s framework for authenticated key exchange \cite{GSU12}, parties consist of a pair of classical and quantum Turing machines, each of which is capable of sending and receiving messages.  The adversary controls all communications between parties, but is restricted in its ability to affect communication between a single party's classical and quantum devices.  The adversary also has the ability to compromise various values used by parties during or after the run of the protocol.  As is typical, the adversary's goal is to distinguish the session key of a completed session from a random string of the same length.

Having defined the adversarial model, we then introduce our two security definitions, \emph{immediate security} against an active, potentially bounded adversary, and \emph{long-term security}, meaning security against an adversary who during the run of the protocol is potentially bounded, but after the protocol completes is unbounded (except by the laws of quantum mechanics).  Our model is generic enough to allow the bound on the adversary to be computational --- assuming that a particular computational problem is hard --- or run-time or memory-bounded~\cite{CM97}.  We adapt the long-term security notion of M\"{u}ller-Quade and Unruh \cite{MU10} from the classical universal composability framework to our classical-quantum model.  

{\it Security of BB84.}
We then proceed in Section~\ref{sec:BB84} to show that the BB84 protocol, when used with a computationally secure classical authentication scheme such as a digital signature, is secure in this model.  For the quantum aspects of the proof, we rely on existing proof techniques, but when combined with the signature scheme in our model, this work provides a proof of the folklore theorem that QKD, when used with computationally secure authentication in a multi-party setting, is information theoretically secure, provided the adversary did not break the authentication during the run of the protocol.  Note, importantly, that this is the first proof of QKD in a multi-party setting; while our QKD protocol is still a 2-party protocol, it operates in an environment where many parties may be interacting simultaneously, whereas previous proofs of security of QKD --- including the universal composability proof of Ben-Or et al. \cite{BHLMO05} --- deal with only 2 honest parties (plus the adversary).

{\it Comparison of quantum and classical AKE protocols.}
Finally, we use our generic security model to compare in Section~\ref{sec:discussion} the security properties of classical key exchange protocols and examples from each of the three classes of QKD protocols (prepare-send-measure, measure-only, prepare-send-only).  This comparison is facilitated by our phrasing of QKD in a security model more closely related to traditional AKE security models, which we can then use to compare the relative powers afforded to the adversary under those models.  In particular, our model allows us to compare how different protocols react when the randomness used in the protocol is revealed --- or if it is later discovered that bad randomness was used.  For example, some classical AKE protocol such as \texttt{UP} \cite{Ust09} is secure even if the randomness used for either a party's long-term secret key or ephemeral secret key is revealed \emph{before} the run of the protocol, but the same is not true for the randomness used to pick basis choices in BB84.  And the EPR protocol of Ekert is secure even if all of the randomness used by the parties is leaked after the protocol completes, unlike BB84 where data bit choices must remain secret.

\section{QKD model}
\label{sec:model}

Our model begins as an enhancement to the eCK model~\cite{LLM07} in which each party has access to a quantum device. The quantum device may be viewed as limited based on for example current hardware limitations. As usual we consider interactive protocols within a multi-party multi-session setting, where communication is controlled by the adversary. Subject to quantum physics restriction the adversary controls the quantum communication channel between parties.   Having described the parties and the communication model, we describe how, if at all, the adversary may gain access to secrets used by the parties.  We then define secrecy against bounded adversaries and long-term security against unbounded adversaries: the long-term security definition is achieved by having the active bounded short-term adversary output a classical and quantum transcript upon which the unbounded quantum adversary may operate.

We next formally describe the model.  We use $k$ to denote a security parameter. In the description we utilize only qubits, but these can if necessary be generalized to arbitrary-dimension quantum systems.

\subsection{Parties and protocols}
\label{sec:model:parties}

\paragraph{A party} (see also~\cite[Definition 1.1 second bullet]{ABE08}) is an interactive classical Turing machine with access to a quantum Turing machine.  Typically we refer to this pair of devices jointly as the party.

The classical machine can activate the quantum device via a special activation request or receive (via designated activation routines) measurement outcomes from the quantum device. The communication is delivered over a two way classical communication tape (the $\tapeInter$-channel in
Figure~\ref{fig:classical}). The classical Turing machine has also access to a sequence of random bits -- the $\tapeRandom$-tape in Figure~\ref{fig:classical} -- and a separate $\tapeClassical$-tape over which the party can receive and send other activation requests and messages as specified by designated routines. Similarly, the quantum Turing device can be activated by the classical Turing machine and can receive and send qubits over a designated quantum channel $\tapeQuantum$ as in Figure~\ref{fig:quantum}.  

\begin{figure}\setlength{\unitlength}{1ex}
\centering
\subfigure[Classical Turing machine]{
\begin{picture}(30,12)(0,0)
\put(4, 0){\framebox(14, 8){Alice}}
\put(11,10){\makebox(0,0){$\tapeInter$}}
\multiput(13,12)(0.5,0){2}{\vector( 0,-1){4}}
\multiput( 9, 8)(0.5,0){2}{\vector( 0, 1){4}}
\put(20,4){\makebox(0,0){$\tapeClassical$}}
\multiput(18, 2)(0,0.5){2}{\vector( 1, 0){4}}
\multiput(22, 5)(0,0.5){2}{\vector(-1, 0){4}}
\put(2,4.5){\makebox(0,0){$\tapeRandom$}}
\multiput( 0, 3)(0,0.5){2}{\vector( 1, 0){4}}
\end{picture}
\label{fig:classical}
}
\subfigure[Quantum Turing machine]{
\begin{picture}(30,12)(0,0)
\put(4, 4){\framebox(14, 8){qAlice}}
\put(11,2){\makebox(0,0){$\tapeInter$}}
\multiput(13, 4)(0.5,0){2}{\vector( 0,-1){4}}
\multiput( 9, 0)(0.5,0){2}{\vector( 0, 1){4}}
\put(20,8){\makebox(0,0){$\tapeQuantum$}}
\put(18, 6){\vector( 1, 0){4}}
\put(22, 10){\vector(-1, 0){4}}
\end{picture}
\label{fig:quantum}
}
\caption{A party's classical and quantum Turing machines}
\end{figure}
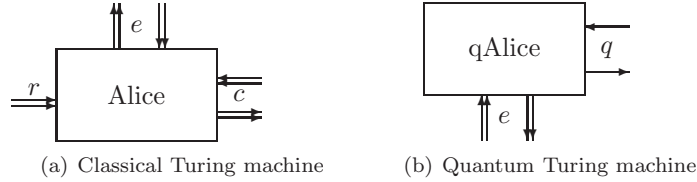

Each party can have associated authenticated public strings (which can be public keys or identifiers). Such public strings are assumed to be distributed over an authenticated channels to other parties. Furthermore, pairs of parties may possess shared secrets that were a priori distributed over a confidential and authenticated channel.

\paragraph{A protocol} is a collection of interactive classical and quantum subroutines that produce a shared secret key between two (or more parties) or output an indicator of an error. The interactions may use messages received on either the classical or quantum channels.  The final output of the protocol is made via the classical Turing machine.

\paragraph{A session} is an execution of the protocol. Sessions are initiated via a special incoming request and upon initiation each one is identified with a unique\footnote{With this definition uniqueness is guaranteed only within a party; globally uniqueness can be guaranteed by requiring the session identifier is the concatenation of the unique party identifier and the party's own session identifier} \emph{session identifier} $\sid{}$ chosen by the party at which the session is executed (in which case we say the party \emph{owns} the session). A session that has been initiated but is not yet completed is called \emph{active}. Since sessions are interactive procedures a party may own more than one active session at a given point of time. 

Each active session has a separate \emph{session state} that stores session-specific classical data.\footnote{While quantum protocols in general may make use of quantum memory for storing quantum states during a session, the current QKD protocols we consider in this paper, such as BB84 or EPR, do not, so we omit this from our model.}

Upon receiving and sending all protocol messages and performing the required measurements and computations specified by the protocol, the session \emph{completes} by having the classical Turing machine output either an error symbol $\nullvalue$ or a tuple $(\sk, \pid, \vv,\uu)$. The tuple consists of:
\begin{itemize}
\item $\sk$: a session key;
\item $\pid$: a party identifier;
\item $\vv$: a vector $(\vv_{0}, \vv_{1}, \dots)$ where each $\vv_{i}$ is a vector of public values or labels; (For example, $\vv_{1}$ may consist of the public values contributed by party $P_{1}$. Including $\vv$ as part of the session output binds the session with the various values used by the parties to compute the session key.)
\item $\uu$: a vector $(\uu_{0}, \uu_{1}, \dots)$ where each vector $\uu_{i}$ is a public value or label; $\uu$ is called the \emph{authentication} vector and indicates what information the session owner uses to identify its peer $\pid$.
\end{itemize}

\begin{definition}[Correctness]
A key exchange protocol $\pi$ is said to be \emph{correct} if, when all protocol messages are relayed faithfully, without changes to content or ordering, the peer parties output the same session key $k$ and the same vector $\vv$.
\end{definition}

\paragraph{Memory.} A party may hold in its memory several \emph{value pairs} of the form $(x, X)$, where $x$ is a private value and $X$ is a public value or label.  The pair may be a \emph{public key pair}, such as private key $x$ and public key $X$, or a \emph{labelled private value}, such as a private value $x$ and a unique public label $X=\ell(x)$.  The value pairs may be generated by some algorithm specified by the protocol. 

There are two classifications of value pairs: \emph{ephemeral} value pairs, which are associated with a particular session $\sid{}$, and \emph{static} value pairs, which can be used across multiple sessions.  The party may also have value pairs that have been generated but not yet used.  If necessary, different types of key pairs may be permitted, for example, if a protocol uses one type of key pair for digital signatures and another type of key pair for public-key encryption. The protocol specifies an algorithm for generating new pairs.  

%

\paragraph{Classical Turing machine communication.} As described above each classical Turing machine has two incoming-outgoing classical communication channels, denoted by $\tapeInter$ and $\tapeClassical$ in Figure~\ref{fig:classical}, over which the classical Turing machine receives activations and submits responses. The responses themselves can be activation requests. Furthermore the classical Turing machine has an input of classical (pseudo-)random bits which can be read at will by the Turing machine, denoted by $\tapeRandom$ in Figure~\ref{fig:classical}. 

The following activations of the classical Turing machine are allowed:

\begin{itemize}

\item $\SendC(\params, \pid)$: This activation is received via channel $\tapeClassical$ and directs the party to begin a new key exchange session. A new session is initiated and assigned a unique session identifier $\sid{}$ based on protocol-specific public parameters $\params$ and an identifier $\pid$ of the party with whom to establish the session. The response to this query includes the session identifier $\sid{}$ and any protocol-specific outgoing classical message $\msg'$ that are sent via the outgoing channel $\tapeClassical$. If required by the protocol specification the Turing machine can send an activation request $\SendCQ(m)$ over the $\tapeInter$ outgoing channel; the activation of the quantum Turing machine may cause that quantum Turing machine to write an output to its $\tapeQuantum$ channel as well, or to prepare its measurement device to receive quantum messages.

\item $\SendC(\sid{}, \msg)$: This query models the delivery of classical messages over $\tapeClassical$-channel. The party's classical Turing machine is activated with session $\sid{}$ and classical message $\msg$. It returns any outgoing classical message $\msg'$ over the $\tapeClassical$-channel. If required by the protocol specification the Turing machine can send an activation request $\SendCQ(m)$ over the $\tapeInter$ outgoing channel; the activation of the quantum Turing machine may cause that quantum Turing machine to write an output to its $\tapeQuantum$ channel as well, or to prepare its measurement device to receive quantum messages.

%
\item $\SendQC(m)$: Upon activation with this query the classical Turing machine activates its most recent session with input $m$. This query may cause the classical Turing machine to output to its $\tapeClassical$ channel, or send another activation over the $\tapeInter$ channel.
\end{itemize}

A protocol specification may request that the classical Turing machine act probabilistically. In this case the classical machine obtains random bits from the $\tapeRandom$-channel.

\paragraph{Quantum Turing machine communication.} Each party's quantum Turing machine has an incoming-outgoing classical communication channel, denoted by $\tapeQuantum$ in Figure~\ref{fig:quantum}, over which the machine receives and submits quantum information. The responses themselves can be activation requests. Furthermore the quantum Turing machine has a two-way classical control channel (denoted by $\tapeInter$ in Figure~\ref{fig:quantum}) with which it communicates with the classical Turing machine. 

The following activations of the quantum Turing machine are allowed:

\begin{itemize}

\item $\SendQ(\rho)$: This query activates the quantum Turing machine with quantum message $\rho$; it returns any outgoing quantum message $\rho'$ over the $\tapeQuantum$-channel. If required by the protocol specification the Turing machine can send an activation request $\SendCQ(m)$ over the $\tapeInter$ outgoing channel, for example, to report any measurement results obtained from measuring $\rho$.  The activation of the classical Turing machine may cause that classical Turing machine to write an output to its $\tapeClassical$ channel as well.

\item $\SendCQ(m)$: This query activates the quantum Turing machine with classical control message $m$, for example to prepare the quantum circuit for execution due to an anticipated $\SendQ$ activation. The activation may cause a quantum state to be output over the outgoing quantum channel $\tapeQuantum$ as well as a classical message to returned over the classical control channel $\tapeInter$.

\end{itemize}

\subsection{Adversarial model}
\label{sec:model:adversary}

%

\paragraph{The adversary} is, similar to a party, a pair of interactive classical and quantum Turing machines.  The adversary's classical Turing machine runs in time at most $t_{c}(k)$ and has access to a quantum Turing machine with runtime bounded by $t_{q}(k)$ and memory bounded by $m_{q}(k)$ qubits; bounds may be unlimited. The adversary takes as its input all public information and may interact with the (honest) parties.  Furthermore the adversary can establish corrupted (dishonest) parties that are fully in control of the adversary.  Honest parties are unable to distinguish between honest and dishonest parties.

Communication over the parties' classical $\tapeClassical$-channels is controlled by the adversary. On the classical channels, the adversary can
read, copy, reorder, insert, delay, modify, drop or forward messages at will. The sending and receiving parties have no intrinsic mechanism to detect which actions, if any, the adversary performed on the classical messages. 

Communication over the parties' quantum $\tapeQuantum$ channels is also controlled by the adversary.  The adversary's operations on the quantum channels are bound by the laws of quantum mechanics: the delivery of quantum messages can be delayed, modified in order, forwarded, or dropped; the adversary can create new quantum states and perform joint quantum operations on quantum messages received from the parties as well as on the adversary's state.  However, due to the laws of quantum mechanics, the adversary cannot necessarily obtain full information about quantum messages from the parties; for example, measurements by the adversary may irrevocably disturb the state of messages transmitted by the parties, and the adversary may be unable to precisely copy a message due to the no-cloning theorem. We assume that the communication channel between the adversary's quantum machine and party's quantum machines are perfect; the adversary's quantum device can simulate any environmental or noise affect on the qubits sent by a party.

\paragraph{Queries.} The adversary can direct a party to perform certain actions by sending any of the aforementioned activation queries over party's the $\tapeClassical$ and $\tapeQuantum$ channels. The adversary has neither immediate control and cannot observe the content exchanged between the classical and quantum subcomponents of a party over the $\tapeInter$ channel, nor has information about the bits obtained from the $\tapeRandom$-channel. Furthermore, to allow for information leakage the adversary may issue the following queries to parties:

\begin{itemize}

\item $\RevealNext \to X$: This query allows the adversary to activate the classical Turing machine to read input from the $\tapeRandom$-channel and learn future public values. The activated party generates a new value pair $(x, X)$, records it as unused, and returns the public value $X$.  (This query may be specialized in the event that there are multiple value pair types specified by the protocol.)

\item $\Partner(X) \to x$: This query allows the adversary to compromise secret values used in the protocol computation. If the party has a value pair
$(x, X)$ in its memory, it returns the private value $x$. $\Partner(\sid{})$ returns the secret key $\sk$ for session $\sid{}$, if it exists; this is often referred to as a \textsf{RevealSessionKey} query.

\end{itemize}

Where necessary to avoid ambiguity, we use a superscript to indicate the party to whom the query is directed, for example $\SendC^{P_{i}}(\sid{}, \msg)$.

\paragraph{Partnering.} If $(x, X)$ is a value pair, with public key value or public label $X$, then the adversary is said to be a \emph{partner} to $X$ if
the adversary issued the query $\Partner(X)$ to a party holding that value pair in its memory. Whenever a party generates a key pair $(x,X)$, for example in response to a session activation or a $\RevealNext$ query, the adversary is \emph{not} a partner to $X$ until the query $\Partner(X)$ is issued. The adversary can become a partner to any value $X$.

\subsection{Security definition}
\label{sec:model:security}

For the purpose of defining session key security, the adversary has access to the following additional oracle:

\begin{itemize}
\item $\Test(i, \sid{}) \to \kappa$: If party $P_{i}$ has not output a session key, return $\nullvalue$.  Otherwise, choose $b \getsr \{0, 1\}$.  If $b=1$, then return the session key $\sk$ from the output for session $\sid{}$ at party $P_{i}$.  If $b=0$, return a random bit string of length equal to the length of the session key $\sk$ in session $\sid{}$ at party $P_{i}$. Only one call to the $\Test$ query is allowed.
\end{itemize}

\begin{definition}[Fresh session]\label{defn:fresh}
A session $\sid{}$ owned by an honest party $P_{i}$ is \emph{fresh} if all of the following occur: 
\begin{enumerate}
\item For every vector $\vv_{j}$, $j \ge 1$, in $P_{i}$'s output for session $\sid{}$, there is at least one element $X$ in $\vv_{j}$ such that the adversary is not a partner to $X$.
\item The adversary did not issue $\Partner({\sid{}}')$ to any honest party $P_{j}$ for which ${\sid{}}'$ has the same public output vector as $\sid{}$
(including the case where ${\sid{}}' = \sid{}$ and $P_{j} = P_{i}$).
\item \emph{At the time of session completion}, for every vector $\uu_{j}$, $j \ge 1$, in $P_{i}$'s output for session $\sid{}$, there was at least one
element $X$ in $\uu_{j}$ such that the adversary was not a partner to $X$.
\end{enumerate}
\end{definition}

We emphasize the difference between the first and the third condition in the last definition: the latter is decided at the time when the session completes, whereas the former is decided at the end of the adversary's execution.  In other words, there may be some values that are okay for the adversary to learn after completion (but not before), and other values that the adversary can never learn.

\begin{definition}[Security]\label{defn:secure}
Let $k$ be a security parameter.  An authenticated key exchange protocol is \emph{secure} if, for all adversaries $\mathcal{A}$ with classical runtime bounded by $t_{c}(k)$, quantum runtime bounded by $t_{q}(k)$, and quantum memory bounded by $m_{q}(k)$, the advantage of $\mathcal{A}$ in guessing the bit $b$ used in the $\Test$ query of a fresh session is negligible in the security parameter; in other words, the probability that $\mathcal{A}$ can distinguish the session key of a fresh session from a random string of the same length is negligible.
\end{definition}

\subsection{Long-term security}\label{sec:model:long-term}

One of the main benefits of quantum key distribution is that it can be secure against unbounded adversaries. However, such strong security comes at the cost of being unable to use computationally secure cryptographic primitives such as public key digital signatures for authentication.  The definition above can be used to analyze QKD when computationally secure cryptographic primitives are used; for example, we can choose a $t_{c}(k)$, $t_{q}(k)$, and $m_{q}(k)$ such that the cryptographic primitive is believed secure against such an adversary.  The particular values may be chosen based on known classical algorithms for factoring or computing discrete logarithms and on the present-day limits of quantum devices.

Regardless of the bound on the active adversary, we can still recover a very strong form of long-term security by considering an unbounded quantum Turing machine acting after the protocol has completed.  In other words, during the run of the protocol, we assume a bounded adversary as in Definition~\ref{defn:secure}; this bounded active adversary produces some classical and quantum transcript which it then provides to the unbounded adversary.  This models the real-world scenario of an adversary being somewhat limited by its classical and quantum computing equipment now but later having much more powerful equipment or making an algorithmic breakthrough.

\begin{definition}[Long-term security]\label{defn:long-term-secure}
An authenticated key exchange protocol is \emph{long-term secure} if, for all unbounded quantum Turing machines $\mathcal{M}$ acting on a classical and quantum transcript produced by a (bounded) adversary $\mathcal{A}$ in Definition~\ref{defn:secure}, the advantage of $\mathcal{M}$ in guessing the bit $b$ used in the $\Test$ query of a fresh session is negligible in the security parameter.
\end{definition}

\subsection{Discussion}
\label{sec:model:discussion}

Several aspects of our model allow for a great range of flexibility in terms of adversarial power and allows quantum key distribution be fairly compared with classical key establishment. We will describe a few specializations of our definition and comment on one of the key differences between our model and traditional classical AKE models, the output vectors $\vv$ and $\uu$.

\paragraph{Bounds on devices.}
First, if $t_{q}(k) = m_{q}(k) = 0$, and Definition~\ref{defn:long-term-secure} is omitted, the model reduces to a classical definition for secure session key establishment. It refines the idea of authentication as the session output can explicitly identify how peers were identified and authenticated. Thus any classical protocol analyzed in~\cite{GSU12} can also be analyzed in the model presented here. The definition here is stronger in the sense it encompasses a wider range of protocols and relates to the definitions in~\cite{LLM07,CK01} the same way~\cite{GSU12} relates to them.

Secondly, it is feasible to model present limitations of quantum devices. While there are ongoing improvements in controlling quantum systems, at present the number of qubits a device can work with is essentially a small constant compared to classical computers. Thus, using our model with appropriate values of $t_{q}(k)$ and $m_{q}(k)$, based on beliefs about current practical limitations, one can devise efficient protocols that are easy to implement but guarantee unconditional future secrecy.  An appropriate assumption on $t_{c}(k)$ --- for example that all adversaries with polynomial running time $t_{c}(k)$ cannot solve a particular hard problem --- allow the model to be used as existing classical reductionist security models are used.

Of course, the devices available to the adversary can be made unbounded essentially allowing a complete quantum world. Thus the definitions presented here are suitable for analyzing novel quantum key distribution protocols.  These alternatives show the wide range of scenarios our definitions incorporate. Due to the unified underlying framework it is easier to compare various protocols and decide which one is the best for the task at hand.

\paragraph{The output vectors.}
One of the key differences between our model and traditional AKE security models is how we phrase restrictions on what secret values the adversary can learn and when.  In the eCK model, for example, a fresh session is defined as one in which the adversary has not learned (a) both the session owner's ephemeral secret key $x$ and long-term secret key $a$, and (b) both the peer's ephemeral secret key $y$ and long-term secret key $b$ (or just the peer's long-term key if no matching peer session exists).  In our model, this could be specified as $\vv = (\vv_{0}=(a, x), \vv_{1}=(b,y))$.

Since in traditional AKE security models the restriction on values learned is specified in the security model, a new security model is required for each differing combination of learnable values.  Though models may often appear similar, they sometimes contain subtle but important formal differences and thus become formally incomparable \cite{Cre11}.  The traditional approach of specifying the values that can or cannot be learned in the security definition itself contrasts with our approach --- building on that of Goldberg~et al.~\cite{GSU12} --- where the vectors $\vv$ and $\uu$ in the session output specify what can or cannot be learned.  As a result, two protocols with differing restrictions on values that can be learned could both be proven secure in our model and then compared based on which values can or cannot be revealed.  

\section{BB84}
\label{sec:BB84}

We now turn to BB84 protocol~\cite{BB84}. We first specify the protocol in the language of the model of Section~\ref{sec:model}, discuss some aspects of our formulation, and complete the section with a security analysis.

\begin{definition}
Let $k$ be a security parameter.  The \emph{BB84 protocol} is defined by having parties responding to activations as follows:
\begin{enumerate}
\item Upon activation $\SendC(\mbox{\tt start}, \mbox{\tt initiator}, B)$ the classical Turing
machine $A$ does the following:
	\begin{enumerate}
	\item create a new session $\sid{A}$ with peer identifier $B$;
	\item read $n_{1}$ (random) data bits $\sid{A}_{dAB}$ from its $\tapeRandom$-tape;
	\item read $n_{1}$ (random) basis bits $\sid{A}_{bA}$ from its $\tapeRandom$-tape;
	\item send activation $\SendCQ(\sid{A}_{bA}, \sid{A}_{dAB})$ on its $\tapeInter$-tape, which indicates that the quantum device should encode each data bit from $\sid{A}_{dAB}$ as $\ket{0}$ or $\ket{1}$ if the corresponding basis bit $\sid{A}_{bA}$ is 0, or as $\ket{+}$ or $\ket{-}$ if the corresponding basis bit $\sid{A}_{bA}$ is 1;
	\item send activation $\SendC(\sid{A},\mbox{\tt start}, \mbox{\tt responder}, A)$ on its $\tapeClassical$-tape to $B$.
	\end{enumerate}

\item Upon activation $\SendC(\sid{A},\mbox{\tt start}, \mbox{\tt responder}, A)$ the classical Turing machine $B$ does the following:
	\begin{enumerate}
	\item create a new session $\sid{B}$ with peer identifier $A$;
	\item read $n_{1}$ (random) basis bits $\sid{B}_{bB}$ from its $\tapeRandom$-tape;
	\item send activation $\SendCQ(\sid{B}_{bB})$ on its $\tapeInter$-tape, which indicates the quantum device should measure the $i$th qubit in the $\ket{0} / \ket{1}$ if the $i$th bit of $\sid{B}_{bB}$ is 0, or in the $\ket{+} / \ket{-}$ basis if $i$th bit of $\sid{B}_{bB}$ is 1.
	\end{enumerate}

\item Upon activation $\SendQC(m)$, the classical Turing machine $B$ does the following:
	\begin{enumerate}
	\item set $\sid{B}_{dAB}$ equal to $m$;
	\item compute $\sigma \gets \Sign_{pk_{B}}(\sid{A},\sid{B},\sid{B}_{bB},B)$;
	\item send activation $\SendC(\sid{A},\sid{B},\sid{B}_{bB},\sigma)$ on its $\tapeClassical$-tape to $A$.
	\end{enumerate}

\item Upon activation $\SendC(\sid{A},\sid{B},\sid{B}_{bB},\sigma)$, the classical Turing machine $A$ does the following:
	\begin{enumerate}
	\item verify $\sigma$ with $pk_{B}$;
	\item discard all bit positions from $\sid{A}_{dAB}$ for which $\sid{A}_{bA}$ is not equal to $\sid{B}_{bB}$; assume there are $n_{2}$ such positions left;
	\item read $n_{2}$ (random) bits $\sid{A}_{indAB}$ from its $\tapeRandom$-tape; set $\sid{A}_{chkAB}$ to be the substring of $\sid{A}_{dAB}$ for which the bits of $\sid{A}_{indAB}$ are 1, and set $\sid{A}_{kAB}$ to be the substring of $\sid{A}_{dAB}$ for which the bits of $\sid{A}_{indAB}$ are 0; let $n_{3}$ denote the length of $\sid{A}_{kAB}$
	\item compute
	$\sigma \gets \Sign_{pk_{A}}(\sid{A},\sid{B},\sid{A}_{bA},\sid{A}_{indAB},\sid{A}_{chkAB},A)$;
	\item send activation $\SendC(\sid{A},\sid{B},\sid{A}_{bA},\sid{A}_{indAB},\sid{A}_{chkAB},\sigma)$ on its $\tapeClassical$-tape to $B$.
	\end{enumerate}

\item Upon activation $\SendC(\sid{A},\sid{B},\sid{A}_{indAB},\sid{A}_{chkAB},\sigma)$, the classical Turing machine $B$ does the following:
	\begin{enumerate}
	\item verify $\sigma$ with $pk_{A}$;
	\item discard all bit positions from $\sid{B}_{dAB}$ for which $\sid{A}_{bA}$ is not equal to $\sid{B}_{bB}$
	\item set $\sid{B}_{chkAB}$ to be the substring of $\sid{B}_{dAB}$ for which the bits of $\sid{A}_{indAB}$ are 1, and set $\sid{B}_{kAB}$ to be the substring of $\sid{B}_{dAB}$ for which the bits of $\sid{A}_{indAB}$ are 0
	\item let $\epsilon$ be the proportion of bits of $\sid{A}_{chkAB}$ that do not match $\sid{B}_{chkAB}$; if $\epsilon > 0.061$ then abort;
	\item compute $\sigma \gets \Sign_{pk_{B}}(\sid{A},\sid{B},\epsilon,B)$;
	\item send activation $\SendC(\sid{A},\sid{B},\epsilon,\sigma)$ on its $\tapeClassical$-tape to $A$.
	\end{enumerate}

\item Upon activation $\SendC(\sid{A},\sid{B},\epsilon,\sigma)$, the classical Turing machine $A$ does the following:
	\begin{enumerate}
	\item verify $\sigma$ with $pk_{B}$;
	\item read (random) bits $\sid{A}_{F}$ from its $\tapeRandom$-tape to construct a random a 2-universal hash function $F : \{0,1\}^{n_{3}} \to \{0,1\}^{r'}$ (where $r' = n_{3}h(\epsilon) + o(n_{3})$) for information reconciliation (see Appendix~\ref{app:IR-PA}) and compute $F' = F(\sid{A}_{kAB})$;
	\item read (random) bits $\sid{A}_{P,G}$ from its $\tapeRandom$-tape to generate a random permutation $P$ on $n_{3}$ elements and a 2-universal hash function $G : \{0,1\}^{n_{3}} \to \{0,1\}^{s'}$ (where $s' = n_{3}(1-3h(\epsilon)) + o(n_{3})$) for privacy amplification (see Appendix~\ref{app:IR-PA}), respectively; compute $\sid{A}_{skAB} \gets G(P(\sid{A}_{kAB}))$;
	\item compute $\sigma \gets \Sign_{pk_{A}}(\sid{A},\sid{B},F,F',P,G,A)$;
	\item send activation $\SendC(\sid{A},\sid{B},F,F',P,G,\sigma)$ on its $\tapeClassical$-tape to $B$;
	\item output $(\sk = \sid{A}_{skAB}, \pid = B, \vv = (\vv_{0} = (\ell(\sid{A}_{dAB})),\vv_{1} = (\ell(\sid{A}_{bAB})), \vv_{2} = (\ell(\sid{B}_{dAB})),\vv_{3}=(\ell(\sid{B}_{bAB})),\vv_{4}=(\ell(\sid{A}_{F})),\vv_{5}=(\ell(\sid{A}_{P,G}))), \uu = (\uu_{1} = (pk_{B})))$ (recall $\ell(\cdot)$ denotes the label describing the corresponding secret value).
	\end{enumerate}

\item Upon activation $\SendC(\sid{A},\sid{B},F,F',P,G,\sigma)$, the classical Turing machine $B$ does the following:
	\begin{enumerate}
	\item verify $\sigma$ with $pk_{A}$;
	\item use $F$ and $F'$ to correct $\sid{B}_{kAB}$ to $\sid{B}_{kAB'}$;
	\item compute $\sid{B}_{skAB} \gets G(P(\sid{B}_{kAB'}))$;
	\item output $(\sk = \sid{B}_{skAB}, \pid = A, \vv = (\vv_{0} = (\ell(\sid{A}_{dAB})),\vv_{1} = (\ell(\sid{A}_{bAB})), \vv_{2} = (\ell(\sid{B}_{dAB})), \vv_{3}=(\ell(\sid{B}_{bAB})),\vv_{4}=(\ell(\sid{A}_{F})),\vv_{5}=(\ell(\sid{A}_{P,G})),), \uu = (\uu_{1} = (pk_{A})))$.
	\end{enumerate}
\end{enumerate}
\end{definition}

\begin{remark}
In the output vector $\vv$, the values $\ell(\sid{A}_{bAB})$, $\ell(\sid{B}_{bAB})$, $\ell(\sid{A}_{F})$, and $\ell(\sid{A}_{P,G})$ appear as single component vectors. But in step 6(e) the values are broadcast in the clear. This may seem a bit contradictory since, if the adversary becomes a partner to either of those values (and therefore learns their content), the session is not fresh, but because of the broadcast the adversary \emph{does} in fact learn the values corresponding to the aforementioned labels. The important distinction is \emph{when} the adversary obtains these values, either before or after the protocol commences and measurements are performed. For the adversary to learn these values before parties' measurements, it must partner to these values, violating session freshness. Learning the values after the session completes is not an issue and the values are given to the adversary ``for free'', without the need for partnering.
\end{remark}

\begin{remark}
The output vector $\uu$ represents the values which the session owner uses to authenticate its peer. Similar to $\ell(\sid{A}_{bAB})$ the authentication information has to be exclusively available to the alleged peer, but only at the time of protocol execution: these values may subsequently be revealed. Therefore, as described in Definition~\ref{defn:fresh}, partnering to $\uu$ is decided upon session completion. 

Observe that for the BB84 protocol above, Alice does not include her own authentication secret $pk_{A}$. This implies that the protocol is resilient to \emph{key compromise impersonation attacks}: even with Alice's authentication keys no party is able to pretend to be someone other than Alice to Alice.
\end{remark}

\subsection{Security of BB84}\label{sec:BB84:proof}

We now show that the BB84 protocol stated above is a secure and long-term-secure authenticated key exchange protocol under the assumption that the bounded active adversary cannot break the signature scheme.

\begin{theorem}[Security of BB84]
\label{thm:BB84-short-term}
Let $k$ be a security parameter.  Suppose that the probability $\epsilon_{\sig}$ that any probabilistic polynomial time classical Turing machine with oracle access to a $(t_{q}(k), m_{q}(k))$-bounded quantum Turing machine can break the signature scheme is negligible in $k$.  Then the BB84 protocol is a secure authenticated key exchange protocol according to Definition~\ref{defn:secure}.
\end{theorem}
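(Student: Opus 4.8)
The plan is a two-stage argument: first use unforgeability of the signature scheme to reduce to a setting in which the classical transcript of the tested session is authentic, and then invoke an existing stand-alone security proof of BB84 over an authenticated classical channel to bound the adversary's advantage.

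\emph{Stage 1 (removing forgeries).} I would fix an adversary $\mathcal{A}$ within the stated bounds and let $\sid{}$ be the fresh session at the honest party $P_i$ carrying the $\Test$ query; assume $P_i$ plays the role of $A$ (the responder case is symmetric), so the honest peer is some $B$ with $\uu_1 = (pk_B)$. Condition~3 of Definition~\ref{defn:fresh} gives that $\mathcal{A}$ is not a partner to $pk_B$ \emph{at the time $\sid{}$ completes}, and condition~2 rules out a $\Partner$ query on any session sharing $\sid{}$'s public output vector. Define $\mathsf{Forge}$ as the event that, before $\sid{}$ completes, some honest party verifies a signature under a public key to which $\mathcal{A}$ is not yet a partner on a message the owner of that key never signed. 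The first step is to show $\Pr[\mathsf{Forge}] \le \numParties \cdot \epsilon_{\sig}$, via a forger that guesses which honest party's key is forged first, plants its challenge key there, and simulates the whole experiment --- answering every $\SendC$, $\SendCQ$, $\SendQC$, $\SendQ$, $\RevealNext$, $\Partner$, and $\Test$ query, and running each honest party's quantum device (which only prepares or measures BB84 states) --- using oracle access to a $(t_q(k), m_q(k))$-bounded quantum Turing machine, and obtaining the embedded signing key only through its signing oracle. That forger is exactly a classical Turing machine (running in time polynomial in $t_c(k)$) with oracle access to a $(t_q(k), m_q(k))$-bounded quantum Turing machine, so the hypothesis bounds its success by $\epsilon_{\sig}$, hence $\Pr[\mathsf{Forge}] = \negl(k)$.

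\emph{Stage 2 (BB84 over an authenticated channel).} Conditioning on $\lnot\mathsf{Forge}$, since every signature in the protocol binds both session identifiers $\sid{A},\sid{B}$ and the sender's role, each classical message accepted by $\sid{}$ was genuinely produced by the matching session at $B$ and conversely; the adversary may still delay, reorder, or drop those messages, but that only triggers an abort, never acceptance of an adversarially chosen key. So from $\sid{}$'s viewpoint the $\tapeClassical$-channel behaves as an authenticated channel between two honest parties, while the adversary retains full quantum-mechanically-admissible control of the $\tapeQuantum$-channel and of the scheduling of all other sessions. Condition~1 of Definition~\ref{defn:fresh} ensures that for each of $\vv_0,\dots,\vv_5$ the adversary is not a partner to the listed label, hence never learns Alice's data bits, Alice's or Bob's basis choices, or the reconciliation, permutation, and privacy-amplification seeds \emph{before} the quantum phase --- exactly the pre-measurement secrecy the quantum analysis requires (cf.\ the Remarks following the protocol definition). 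I would then apply a standard security proof of BB84 with the parameters fixed here (abort threshold $0.061$, information-reconciliation leakage $\approx n_3 h(\epsilon)$, privacy-amplification output $\approx n_3(1-3h(\epsilon))$; see Appendix~\ref{app:IR-PA}), concluding that $\sid{A}_{skAB}$ is within negligible trace distance of a string uniform and independent of $\mathcal{A}$'s joint classical--quantum view, even against an unbounded quantum adversary on $\tapeQuantum$ and a fortiori against our bounded $\mathcal{A}$. Summing the two stages, $\mathcal{A}$'s $\Test$-advantage is at most $\Pr[\mathsf{Forge}]$ plus this negligible term, which is Definition~\ref{defn:secure}.

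\emph{Where the difficulty lies.} The main obstacle is Stage~2: repackaging an existing bare two-party, single-session BB84 proof so that it applies inside this adaptive multi-party environment. One has to argue that the other sessions, the $\RevealNext$ queries, and the honest parties' long-term signing keys hand the quantum adversary nothing beyond what the stand-alone model already grants --- distinct sessions draw from independent $\tapeRandom$-tapes and are kept apart by the session identifiers inside every signature --- and that the labels gathered in $\vv$ genuinely enumerate all the randomness whose pre-measurement secrecy the quantum proof invokes. Getting the reduction's quantum-resource bookkeeping to line up with the hypothesis's ``probabilistic polynomial-time classical machine with a $(t_q(k),m_q(k))$-bounded quantum oracle'' is what lets the two stages be glued without loss.
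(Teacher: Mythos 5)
Your proposal matches the paper's proof in both structure and substance: the paper also first uses a guess-the-forged-party reduction to the signature scheme's unforgeability (a game hop bounding the abort event by roughly $\numParties\cdot\epsilon_{\sig}$, relying on the session identifiers bound inside every signature), and then, with the classical channel effectively authenticated, invokes the stand-alone BB84 analysis of Christandl et al.\ (smooth R\'{e}nyi entropy, the stated reconciliation leakage and privacy-amplification parameters) to conclude the key is negligibly far from uniform. The proposal is correct and takes essentially the same approach, differing only in presentation (explicit bad-event conditioning versus the paper's sequence-of-games phrasing and its constant factor in the forgery bound).
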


\emph{Proof sketch.} Our proof combines an existing proof of security by Christandl et al. \cite{CRE04} for the BB84 protocol with the sequence-of-games technique of Shoup \cite{Sho06}.  First we show --- using techniques from classical reductionist security --- that no bounded adversary can (except with negligible probability) successfully tamper with the classical authenticated communication.  Then we show --- using techniques from QKD security proofs --- that the adversary cannot distinguish the key from random.  Details appear in Appendix~\ref{app:proof}.

\begin{theorem}[Long-term security of BB84]
\label{thm:BB84-long-term}
Let $k$ be a security parameter.  Suppose the signature scheme is secure against all bounded adversaries as specified in Theorem~\ref{thm:BB84-short-term}.  Then the BB84 protocol is a long-term secure authenticated key exchange protocol according to Definition~\ref{defn:long-term-secure}.
\end{theorem}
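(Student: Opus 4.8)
\emph{Proof proposal.} The plan is to piggyback on the sequence of games already used to establish Theorem~\ref{thm:BB84-short-term}, exploiting the fact that the computational assumption on the signature scheme is invoked \emph{only} in the game hops that take place ``during'' the protocol run. Concretely, I would start from the real experiment of Definition~\ref{defn:long-term-secure} and run the same initial transitions as in the short-term proof: replace the test session's authenticated classical channel by an idealized one, arguing that a successful deviation by the \emph{bounded} active adversary $\mathcal{A}$ would yield a forgery against the signature scheme, and hence occurs with probability at most a polynomial multiple of $\epsilon_{\sig}$, which is negligible by hypothesis. This step uses only the bound $(t_{c}(k), t_{q}(k), m_{q}(k))$ on $\mathcal{A}$; the unbounded machine $\mathcal{M}$ is not yet in play, since by the time $\mathcal{M}$ receives the transcript the fresh session has already completed, and a later forgery cannot retroactively change the challenge bit $b$ or the value of the session key.

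After this hop the remaining experiment is purely information-theoretic. For a fresh session, by the freshness conditions of Definition~\ref{defn:fresh} the strings $\sid{A}_{dAB}$, $\sid{A}_{bA}$, $\sid{B}_{bB}$, and the index/check strings were not partnered before the quantum measurements were performed, so all of the adversary's knowledge of the raw key is confined to (i) the quantum register held by $\mathcal{A}$'s quantum machine after its interaction on the $\tapeQuantum$-channel, (ii) the classical transcript, including the broadcast values $\sid{A}_{indAB}$, $\sid{A}_{chkAB}$, $\epsilon$, $F$, $F'$, $P$, $G$, and (iii) whatever public values later $\Partner$ queries revealed ``for free'' after completion. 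I would then invoke the quantum part of the proof of Theorem~\ref{thm:BB84-short-term} --- the BB84 security estimate of Christandl et al.~\cite{CRE04} together with quantum privacy amplification --- to conclude that, conditioned on the parameter-estimation test $\epsilon \le 0.061$ passing, the joint state of (session key, all of the above side information) is within negligible trace distance of (uniformly random string of the same length, identical side information).

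The final step is to hand this joint state to the unbounded machine $\mathcal{M}$ and appeal to the operational meaning of trace distance: since trace distance is non-increasing under any CPTP map --- in particular under any processing, however unbounded, that $\mathcal{M}$ performs --- the probability that $\mathcal{M}$ correctly guesses $b$ in the $\Test$ query differs from $1/2$ by at most the trace-distance bound above, plus the negligible failure probability of the authentication hop and of the parameter-estimation test. Summing these negligible terms gives long-term security.

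I expect the main obstacle to be bookkeeping rather than a new idea: one must argue carefully that \emph{everything} $\mathcal{M}$ could use is already accounted for as side information in the QKD estimate --- that the classical transcript produced by $\mathcal{A}$ adds nothing beyond what that estimate already conditions on, and that the values the adversary may learn for free after completion (the basis labels, $F$, $P$, $G$; cf.\ the Remarks above) are exactly the non-secret public values, whereas the values whose early disclosure would break freshness (the data bits) never appear in the transcript in a usable form. A secondary subtlety is making precise that the game hop removing signature forgeries remains sound even though the transcript $\mathcal{M}$ eventually sees could let $\mathcal{M}$ itself break the signature scheme: this is harmless because freshness is evaluated over the whole execution and, after session completion, a forgery changes neither the challenge bit nor the key, so the reduction to $\epsilon_{\sig}$ only ever needs to simulate the bounded phase.
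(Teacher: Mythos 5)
Your proposal is correct and follows essentially the same route as the paper: the computational assumption is consumed only in the game hop that idealizes the authenticated classical channel against the bounded active adversary, and the remaining argument (the Christandl et al.\ estimate plus privacy amplification) is information-theoretic, so the unbounded machine $\mathcal{M}$, which only post-processes a fixed transcript and cannot retroactively tamper with it, gains nothing. Your explicit appeal to monotonicity of trace distance under arbitrary post-processing simply spells out rigorously what the paper's shorter proof asserts informally, so no gap.
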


\begin{proof}
The argument has in fact appeared in the argument of Theorem~\ref{thm:BB84-short-term}. Observe that in its proof the bounds on $t_{c}(k)$, $t_{q}(k)$, and $m_{q}(k)$ and on the adversary general is required only for guaranteeing the authenticity and origin of messages in the hop from game 0 to game 1. Once assured that the classical authentic communication has not been tampered with the remainder of the argument is a typical argument for a quantum key distribution scheme, which does not require any bounds on the adversarial power. Since the unbounded adversary runs after the protocol completes, meaning it cannot inject reorder or modify messages in the transcript, the past classical communication remains authentic and the result follows.
\end{proof}

\section{Comparing classical and quantum key exchange protocols}\label{sec:discussion}

In Section~\ref{sec:model:discussion}, we discussed how our model can be used to analyze both purely classical protocols and quantum protocols. Given its similarity to existing classical AKE security models and its flexibility in analyzing the security of a variety of protocols, it is natural to use the model to try to identify qualitative differences between different classes of protocols.

One of the key differences between existing AKE security models such as CK01 and eCK is what randomness the adversary is allowed reveal, and when, and still have the protocol be secure.  Our framework is more generic: it is not the \emph{model} that specifies which randomness can be revealed but the \emph{protocol itself} in its output vectors $\vv$ and $\uu$.  As a result, we can ``compare'' protocols by viewing them all within our model and then comparing which values are included in the output vector.\footnote{We note that it has been shown \cite{Cre11} that the CK01 and eCK models are \emph{formally incomparable}, meaning neither can be shown to imply the other.  Nonetheless, properties of \emph{specific protocols} secure in those models may be compared by resorting to a third model such as the one in this paper.}

Table~\ref{fig:comparison} summarizes the observations of this section.  We compare are two qualitatively different classical AKE protocols and three qualitatively different QKD protocols: (1) the signed Diffie--Hellman protocol \cite{CK01} (which can be proven secure in the CK01 model), (2) the \texttt{UP} protocol \cite{Ust09}, a variant of the MQV protocol \cite{LMQSV03} which can be proven secure in the eCK model, (3) the BB84 \cite{BB84} prepare-send-measure QKD protocol, (4) the EPR \cite{Eke91} (entanglement-based) measure-only QKD protocol, and (5) the BHM96 \cite{BHM96,Ina02} prepare-send-only QKD protocol.  We note that our model is flexible enough to allow all these protocols to be proven secure in it, of course with different cryptographic assumptions, bounds on the adversary, and different output vectors.  It is these differences we compare in Table~\ref{fig:comparison}

\begin{table}[tb]
\caption{Comparison of security properties of various classical and quantum AKE protocols.}
\label{fig:comparison}
\centering
\scalebox{0.75}{
\begin{tabular}{l || c | c | c | c | c}
\multirow{2}{*}{Protocol} 	& Signed Diffie-- 		& \texttt{UP} 			& BB84 				& EPR 				& BHM96 \\ 
					& Hellman \cite{CK01} 	& \cite{Ust09} 			& \cite{BB84} 			& \cite{Eke91} 			& \cite{BHM96,Ina02} \\ \hline \hline
Protocol type 			& \multirow{2}{*}{classical} & \multirow{2}{*}{classical} & quantum 			& quantum 			& quantum \\
					& 					& 					& prepare-send-measure	& measure-only			& prepare-send-only \\ \hline
Security model in which	& CK01 \cite{CK01}, 		& eCK \cite{LLM07},	 	& \multirow{2}{*}{this paper} & \multirow{2}{*}{this paper} & \multirow{2}{*}{this paper} \\
can be proven secure	& this paper			& this paper			&					&					& \\ \hline
Randomness revealable 	& $\tickNo$ static key 	& at most 1 of	 		& $\tickNo$ static key 	& $\tickNo$ static key 	& $\tickNo$ static key \\
{\bf before} protocol run?	& $\tickNo$ ephemeral key & static key, 			& $\tickNo$ basic choice 	& $\tickNo$ basis choice 	& $\tickNo$ basis choice \\	
					&					& ephemeral key	 	& $\tickNo$ data bits		& 					& $\tickNo$ data bits \\
					&					&					& $\tickNo$ info. recon.	& $\tickNo$ info. recon. 	& $\tickNo$ info. recon. \\
					&					&					& $\tickNo$ priv. amp.	& $\tickNo$ priv. amp. 	& $\tickNo$ priv. amp. \\ \hline
Randomness revealable 	& $\tickYes$ static key 	& at most 1 of	 		& $\tickYes$ static key 	& $\tickYes$ static key 	& $\tickYes$ static key \\
{\bf after} protocol run?	& $\tickNo$ ephemeral key & static key, 			& $\tickYes$ basis choice 	& $\tickYes$ basis choice 	& $\tickYes$ basis choice \\
					&					& ephemeral key 		& $\tickNo$ data bits		& 					& $\tickNo$ data bits \\
					&					&					& $\tickYes$ info. recon.	& $\tickYes$ info. recon. 	& $\tickYes$ info. recon. \\
					&					&					& $\tickYes$ priv. amp.	& $\tickYes$ priv. amp. 	& $\tickYes$ priv. amp. \\ \hline
Short-term security		& computational		& computational		& computational or 		& computational or		& computational or \\
					& assumption			& assumption			& information-theoretic	& information-theoretic	& information-theoretic \\ \hline
Long-term security		& $\tickNo$			& $\tickNo$			& assuming short-term-	& assuming short-term-	& assuming short-term- \\
					&					&					& secure authentication	& secure authentication	& secure authentication 
\end{tabular}
}
\end{table}

{\it Revealing randomness before the run of the protocol.}
Some classical AKE protocols, especially eCK-secure protocols such as \texttt{UP} and similar MQV-style protocols, remain secure even if the adversary learns either the ephemeral secret key or the long-term secret key, but not both, before the run of the protocol.  This contrasts with all known QKD protocols, where none of the random values used during the protocol --- the long-term secret key, the basis choices (for measure protocols), data bits (for prepare protocols), information reconciliation function, or privacy amplification function --- can be revealed to the adversary in advance.  (This is why all of these values are included individually in the output vector $\vv$ in the BB84 specification in Section~\ref{sec:BB84}.)

{\it Revealing randomness after the run of the protocol.}
For classical AKE protocols to remain secure, at least some secret values must not be revealed after the run of the protocol.  For protocols with so-called perfect forward secrecy, such as signed Diffie--Hellman, the parties' long-term secret keys can be corrupted after the run of the protocol, but not the ephemeral secret keys.  For eCK-secure protocols such as \texttt{UP} and similar MQV-style protocols, either the long-term secret key or the ephemeral secret key, but not both, can be revealed after the protocol run (or, as per the previous paragraph, before/during).  For measure-only entanglement-based QKD protocols such as EPR, all random choices made by the parties can be revealed after the run of the protocol: this is because the key bits are not chosen by the parties, nor in fact by the adversary, but are the result of measurements and (after successful privacy amplification) are uncorrelated with any of the input bits of any of the parties, including the adversary.  This is not the case for prepare-and-send protocols such as BB84 or BHM96, as the sender does randomly choose data bits which must remain secret.

{\it Short-term and long-term security.}
Classical AKE protocols can be proven secure only under computational assumptions, and as such only offer short-term security in the sense of Definition~\ref{defn:secure}.  Even against an unbounded passive adversary they do not retain any of their secrecy properties.  Thus classical AKE protocols are only secure against bounded short-term adversaries; however, they can be compared on the relative strength of the bound on the adversary.  This contrasts with QKD protocols.  QKD can be shown to be secure against either \emph{unbounded} short-term adversaries, by using information-theoretic authentication, or secure against bounded short-term adversaries when using a computationally secure authentication scheme as we have shown for BB84 in Section~\ref{sec:BB84:proof}.  A key contribution of the model in Section~\ref{sec:model} is a formalism which captures the notion that QKD can remain secure against an unbounded adversary after the protocol completes, provided the adversary at the time of the run of the protocol could not break the authentication scheme.

We note that applications wishing to achieve both the long-term security properties of QKD and the resistance to randomness revelation that eCK-secure classical AKE protocols have could do so by running both protocols in parallel for each session, and then combining the keys output by the two protocols together; if combined correctly, the resulting key would provide strong short-term security and strong long-term security.  This approach is indeed being used by QKD implementers, such as commercial QKD vendor ID Quantique.\footnote{\url{http://www.idquantique.com/images/stories/PDF/cerberis-encryptor/cerberis-specs.pdf}}

\section{Conclusions}\label{sec:concl}

We have presented a model for key establishment which incorporates both classical key agreement and quantum key distribution. Our model can accommodate a wide range of practical and theoretical scenarios and can serve as a common framework in which to compare relative security properties of different protocols.  A key aspect of our model is that restrictions on values that the adversary can compromise are not specified by the model but by the output of the protocol.  Using our model, we were able to provide a formal argument for the short-term and long-term security of BB84 in the multi-user setting while using computationally secure authentication.  

The ability to compare various classical and quantum protocols in our model has allowed us to identify an important distinction between existing classical key establishment and quantum key distribution protocols.  At a high level, classical protocols can provide more assurances against online adversaries who can leak or infiltrate in certain ways, but in the long run may be insecure against potential future advances. Current quantum protocols provide assurances against somewhat weaker online adversaries but retain secrecy indefinitely, even against future advances in computing technology. 

Since in our model the relative strength of a fresh session is specified by the conditions given in the output vector, an interesting open problem would be to use our model develop a quantum key distribution protocol which does retain its security attributes in the short- and long-terms even if some random values were known before the run of the protocol.  Also of interest is how to best combined keys from both quantum and classical key exchange protocols run in parallel.

\begin{unanonymous}

\subsection*{Acknowledgements}

The authors gratefully acknowledge helpful discussions with Norbert L\"{u}tkenhaus, Alfred Menezes, and Kenny Paterson.

MM is supported by NSERC (Discovery, SPG FREQUENCY, CREATE), QuantumWorks, MITACS, CIFAR, ORF. IQC and Perimeter
Institute are supported in part by the Government of Canada and the Province of Ontario.

\end{unanonymous}

\bibliographystyle{halphads}
\bibliography{refs}

\appendix

\section{Information reconciliation and privacy amplification}
\label{app:IR-PA}

See \cite[\S4.4.1--4.4.3]{CRE04} for a formal analysis of information reconciliation and privacy amplification in the context of quantum key distribution.

\subsection{2-universal hash functions}
A family of \emph{2-universal hash functions} is a set of hash functions $\mathcal{H}$ mapping a set $U$ to bit strings of length $r'$ if, for all $x, y \in U$ with $x \ne y$, 
\[ \Pr_{H \in \mathcal{H}} \left( H(x) = H(y) \right) \le 2^{-r'} \enspace . \]
An example of a 2-universal hash function is as follows.  Fix $r'$.  Let $U=\{0, 1, \dots, 2^{w}-1\}$, with $w > r'$.  Let $a$ be a randomly chosen positive odd integer with $a < 2^{w}$ and let $b=i2^{w/2}$ where $i$ is chosen at random from $\{0, \dots, 2^{w/2}-1\}$.  Define 
\[ H_{a, b}(x) = ((ax+b) \mod 2^{w}) \mathop{\mathrm{div}} 2^{w-r'} \]
where $\mathrm{div}$ denotes integer division.  Then $\mathcal{H} = \{ H_{a, b} : a, b \mbox{ as above} \}$ is a family of 2-universal hash functions \cite{DHKP97}.

\subsection{Using 2-universal hash functions for information reconciliation}

Let $\epsilon$ be the proportion of Bob's check bits $ckb_{B}$ that disagree with Alice's check bits $ckb_{A}$. Set $r = \lceil nh(\epsilon) \rceil$.  Choose $r' = r+o(n)$.  Choose $F \in_{R} \mathcal{H}_{r'}$.  Alice sends to Bob the description of the function $F$ and the value $F(\sid{A}_{kAB})$.  Bob corrects $\sid{B}_{kAB}$ to $\sid{A}_{kAB}$ by guessing the errors and checking based on the received value $F(\sid{A}_{kAB})$.

Note that information reconciliation can also be achieved using certain types of error correcting codes.

\subsection{Using 2-universal hash functions for privacy amplification}

Alice chooses a random permutation $P$ on $|\sid{A}_{kAB}|$ elements.  She also chooses a random 2-universal hash function $G$ that maps $|\sid{A}_{kAB}|$ bits to $s'=n-\lceil 3nh(\epsilon) \rceil+o(n)$ bits, where $\epsilon$ is (as before) the proportion of Bob's check bits $ckb_{B}$ that disagree with Alice's check bits $ckb_{A}$.  Alice sends $P$ and $G$ to Bob authentically, for example by signing it.  Alice computes her final session key as $\sid{A}_{skAB}=G(P(\sid{A}_{kAB}))$ and Bob computes his final session key as $\sid{B}_{skAB}=G(P(\sid{B}_{kAB}))$.

\section{Proof of Theorem~\ref{thm:BB84-short-term}}\label{app:proof}

\begin{proof}
Our proof combines an existing proof of security by Christandl et al. \cite{CRE04} for the BB84 protocol with the sequence-of-games technique of Shoup \cite{Sho06}.\footnote{In the sequence-of-games technique, we make small changes to the security experiment, one after the other, beginning with the original security experiment.  We must show that no adversary can distinguish any of the individual changes we made, and then that the final version of the experiment that we reach is secure.}  First we show --- using techniques from classical reductionist security --- that no bounded adversary can (except with negligible probability) successfully tamper with the classical authenticated communication.  Then we show --- using techniques from QKD security proofs --- that the adversary cannot distinguish the key from random.

Let $\Succ_{i}$ denote the event that, in game $i$, the adversary successfully guesses the bit $b$ used in the $\Test$ query against a fresh session.

\paragraph{Game 0.}
This is the original security experiment.  Our goal is to prove an upper bound on $\left| \Pr \left( \Succ_{0} \right) - \frac{1}{2} \right|$.

\paragraph{From Game 0 to Game 1.}
In this game, we want to ensure that all parties that output session keys receive as input over the classical $\tapeClassical$-channel exactly the messages sent by its peer's session.  We make use of the fact that each party chooses its session identifier uniquely (within itself) and that these session identifiers are included in every digital signature.

Let $\abort_{\sig}$ be the event that there exists an honest party $P$ owning a fresh session $\sid{A}$ that output a session key such that
\begin{itemize}
\item party $P$ received $\sid{B}$ as the session identifier of the peer's session,
\item there there is no honest party $P'$ with session identifier $\sid{B}$ and peer session identifier $\sid{A}$,
\item but the signature received by party $P$ in either step 4(a) (if $P$
is ``Alice''), step 5(a) (if $P$ is ``Bob''), 6(a) (if $P$ is ``Alice''), or 7(a) (if $P$ is ``Bob'') verifies correctly under the long-term public key of the party corresponding to the peer identifier $pid$ of the session.
\end{itemize}
If $\abort_{\sig}$ occurs, the challenger aborts.

We have that $\left| \Pr \left( \Succ_{0} \right) - \Pr \left( \Succ_{1} \right) \right| \le \Pr \left( \abort_{\sig} \right)$.  

We now need a bound on $\Pr \left( \abort_{\sig} \right)$.  We will obtain such a bound by constructing a signature forger as follows.  The forger receives as input a public key $pk^{*}$ and simulates the challenger for the adversary.  The challenger guesses an index $i^{*}$ of a party, and generates all public keys / secrets for all parties except party $i^{*}$ as before.  The challenger then proceeds exactly as in game 0, except that whenever party $P_{i^{*}}$ is required to generate a signature on a message $m$, the challenger uses the signing oracle of the signature challenger.  

Suppose event $\abort_{\sig}$ occurs at some party $P_{i}$ in a session with
peer identifier $P_{j}$.  This means that party $P_{i}$ has received as input
a signature on either $(\sid{A},\sid{B},\sid{B}_{bB},B)$ (in step 4(a)),
$(\sid{A},\sid{B},\sid{A}_{bA},\sid{A}_{indAB},\sid{A}_{chkAB},\allowbreak A)$ (in
step 5(a)), $(\sid{B},\sid{A},\epsilon,B)$ (in step 6(a)), or
$(\sid{A},\sid{B},F,F',P,G,A)$ (in step 7(a)) but no session at party
$P_{j}$ ever issued the corresponding signature, since there is session with
identifier $\sid{B}$ at $P_{j}$.  

If $i^{*} = j$, which happens with non-negligible probability $1/\numParties$, where $\numParties$ is the number of parties, then the forger can use the signature $\sigma$ received by $P_{i}$ to break the existential-unforgeability-under-chosen-message-attack of the signature scheme with success probability at least $4\epsilon_{\sig}$.  Thus,
\begin{equation}\label{eqn:G0-G1}
\left| \Pr \left( \Succ_{0} \right) - \Pr \left( \Succ_{1} \right) \right| \le 4 \cdot \numParties \cdot \epsilon_{\sig} \enspace .
\end{equation}

\paragraph{Game 1.} Having assured that classical communication is untampered with in honest sessions that output session keys, we now make use of standard proofs for security of quantum key distribution.  In particular, we follow the technique of Christandl et al.~\cite{CRE04}.  We provide a brief sketch of their argument.

Let $\varepsilon \ge 0$ be negligible in $k$.  The \emph{$\varepsilon$-smooth R\'{e}nyi entropy of order $\infty$ of a probability distribution $P$} is denoted by $H_{\infty}^{\varepsilon}(P)$ \cite[Definition 3.5]{CRE04}.  For intuitive purposes, we will refer to this as simply the ``sR-entropy'' of $P$; the detailed analysis appears in \cite[\S4.4--5.1]{CRE04}.

The sR-entropy of $\sid{A}_{kAB}$ is $n_{3}$ bits.  After transmitting the information reconciliation value $F'=F(\sid{A}_{kAB})$ in step 6, $A$ reveals at most an additional $r' = n_{3}h(\epsilon)+o(n)$ bits of sR-entropy.

Due to the laws of quantum mechanics, any attacker observing or modifying will alter $A$'s quantum transmission in proportion to the amount of information the attacker gains.  In particular, we can obtain an upper bound on the amount of information learned by the attacker about $\sid{A}_{kAB}$ based on the proportion of errors in the check bits $\sid{A}_{chkAB}$ and $\sid{B}_{chkAB}$.  If $\epsilon$ is the proportion of errors, then the amount of information learned by any attacker is (except with probability exponentially small in $n_{3}$) upper-bounded by $2n_{3}h(\epsilon)$.

Thus, the attacker's sR-entropy of $\sid{A}_{kAB}$ conditioned on her attack is at least $n_{3}(1-3h(\epsilon))+o(n_{3})$.  By applying privacy amplification in step 6(c) and 7(c), Theorem~4.7 of \cite{CRE04} implies that the probability distribution on the final session key $\sid{A}_{skAB}$ is $\delta$ close to uniform, where $\delta \le 3 \cdot 2^{-\frac{n_{3}-r-s+1}{2}} + \negl(k)$, which is negligible in $k$.  Thus no attacker can distinguish $\sid{A}_{skAB}$ from a uniformly random string of the same length except with negligible probability.  This argument shows that
\begin{equation}\label{eqn:G1}
\Pr \left( \Succ_{1} \right) \le \negl(k) \enspace .
\end{equation}

Combining equations~(\ref{eqn:G0-G1}) and (\ref{eqn:G1}), we obtain our result that $\left| \Pr \left( \Succ_{0} \right) - \frac{1}{2} \right| \le \negl(k)$.
\end{proof}

\end{document}